\begin{document}
\title{Online Multi-Facility Location}
%
%
\author{Christine Markarian\inst{1} \and Abdul-Nasser Kassar\inst{2} \and Manal Yunis\inst{2}}
%
%
\institute{Department of Mathematical Sciences, Haigazian University, Beirut, Lebanon
\and Department of Information Technology and Operations Management, Lebanese American University, Beirut, Lebanon \\
\email{christine.markarian@haigazian.edu.lb; abdulnasser.kassar@lau.edu.lb; myunis@lau.edu.lb}}
%
\maketitle              
\begin{abstract} Facility Location problems ask to place facilities in a way that optimizes a given objective function so as to provide a service to all clients. These are one of the most well-studied optimization problems spanning many research areas such as operations research, computer science, and management science. Traditionally, these problems are solved with the assumption that clients need to be served by one facility each. In many real-world scenarios, it is very likely that clients need a robust service that requires more than one facility for each client. In this paper, we capture this robustness by exploring a generalization of Facility Location problems, called Multi-Facility Location problems, in the online setting. An additional parameter $k$, which represents the number of facilities required to serve a client, is given. We propose the first online algorithms for the metric and non-metric variants of Multi-Facility Location and measure their performance with competitive analysis, the standard to measure online algorithms, in the worst case, in which the cost of the online algorithm is compared to that of the optimal offline algorithm that knows the entire input sequence in advance.


\keywords{Robustness \and Mutli-facility location \and Online algorithms \and  Competitive analysis \and Randomized rounding}
\end{abstract}

\section{Introduction}

\emph{Facility Location} (FL) is a classical NP-hard optimization problem widely studied in the fields of computer science and operations research~\cite{FLbookapplications,manne1964plant}. In its simplest form, we are given a set of facilities and a set of clients. Each facility has an opening cost and each client $i$ has a connecting cost to each facility $j$, which is the distance between $i$ and $j$. The goal is to open a subset of the facilities and connect the clients to open facilities so as to minimize the sum of the facility costs and the connecting costs. FL is known as two versions, \emph{metric} and \emph{non-metric}. In the metric version, the distances are assumed to be symmetric and satisfy the triangle inequality. 

We consider the \emph{online} setting in which clients are not known in advance but revealed to the algorithm over time. As soon as one arrives, it needs to be connected. Many real-world applications that contain FL as a sub-problem have this online nature in which one is expected to react to present demands whenever they arrive, without knowing about future demands. Maintaining a given optimization goal in the face of this uncertainty becomes more challenging and encourages the study of designing \emph{online algorithms}~\cite{fiat1998online} rather than classical offline algorithms for FL problems. The standard framework to measure online algorithms is \emph{competitive analysis}, in which demands and their arrival order are selected by an
oblivious adversary, that is unaware of the choices of the algorithm. An online algorithm is $c$-competitive or has competitive ratio $c$ if for all sequences of demands, the cost incurred by the algorithm is at most $c$ times the cost incurred by an optimal offline algorithm, which knows the entire sequence of demands in advance.

The study of \emph{Facility Location} (FL) in the online setting initiated with Meyerson~\cite{meyerson2001online}, who introduced the metric \emph{Online Facility Location} problem (OFL), and proposed an $\mathcal{O}(\log n)$-competitive randomized algorithm, where $n$ is the number of clients. Alon~$\emph{et al.}$~\cite{Alon:2006:GAO:1198513.1198522} studied the non-metric version and proposed an $\mathcal{O}(\log n \log m)$-competitive randomized algorithm, where $n$ is the number of clients and $m$ is the number of facilities. Many other variations were known for both metric and non-metric variants in the online setting~\cite{abshoff2016towards,anagnostopoulos2004simple,diveki2011online,fotakis2007primal,fotakis2008competitive,markarianonline}.

All of these works assume that clients need to be served by one facility each. In many real-world applications, a robust service, in which a client is served by more than one facility, is desirable. Facilities and/or connections to facilities may be prone to failure and assigning clients to multiple facilities would provide a fault-tolerant solution, such as providing replicated cash data in distributed networks~\cite{Byrka2010}.

\subsection{Our Contribution} 

In this paper, we capture this robustness by exploring a generalization of online \emph{Facility Location}~\cite{Alon:2006:GAO:1198513.1198522,fotakis2008competitive}, in which we are additionally given a parameter $k$, which is the number of facilities required to serve a client. We refer to this generalization as \emph{online multi-facility location} and define it as follows. 

\begin{definition} (Online multi-facility location) We are given a collection of $m$ facilities, $n$ clients, and a positive integer $k$. Each facility has an \emph{opening cost} and each client has a \emph{connecting cost} to each facility. Clients arrive over time. As soon as one arrives, it needs to be connected to at least $k$ open facilities. To open a facility, we pay its opening cost. To connect a client to a facility, we pay the corresponding connecting cost.  The goal is to minimize the total opening and connecting costs. 
\end{definition}

We address the metric and non-metric online multi-facility location problems. As far as we are aware of, there are no online algorithms in the literature that solve these variants or can be trivially extended to solve them. 

\paragraph{Lower bounds.} For the non-metric version, there is a lower bound of $\Omega(\log m\log n)$, under the assumption that NP $\not\subseteq$ BPP, where $m$ is the number of facilities and $n$ is the number of clients, due to the lower bound given for \emph{Online Set Cover}~\cite{Korman}. As for the metric version, there is a lower bound of $\Omega(\frac{\log n}{\log \log n})$, where $n$ is the number of clients, due to the lower bound given for metric OFL~\cite{fotakis2008competitive}.

\begin{enumerate}
\item We refer to the non-metric version as \emph{Online Non-metric Multi-Facility Location} (ONMFL). We propose an online $\mathcal{O}(\log (kn) \log m)$-competitive randomized algorithm for ONMFL, where $m$ is the number of facilities; $n$ is the number of clients; and $k$ is the number of required connections. The latter uses a \emph{randomized rounding} approach that first constructs a fractional solution and then rounds it into an integral one. Its competitive ratio is analyzed with simple arguments based on first comparing the fractional solution constructed by the algorithm to the optimal offline solution and then measuring the fractional solution in terms of the integral one. 

\item We refer to the metric version as \emph{Online Metric Multi-Facility Location} (OMMFL). We propose an online $\mathcal{O}(\max \{ \frac{f_{max}}{f_{min}}, \frac{c_{max}}{c_{min}} \} \cdot k  \cdot \frac{\log n}{\log \log n})$-competitive deterministic algorithm for OMMFL, where $n$ is the number of clients; $k$ is the number of required connections; $c_{max}$ and $c_{min}$ are the maximum and minimum connecting costs, respectively; $f_{max}$ and $f_{min}$ are the maximum and minimum facility costs, respectively.
The idea of the algorithm is to ensure first that each client is connected to one facility by running an algorithm for metric \emph{Online Facility Location} (OFL). Then, the $k -1$ remaining connections are made by choosing the cheapest possible facilities so as not to worsen the competitive ratio by much. Our approach can be seen as a general framework that transforms any given online algorithm for metric OFL into an algorithm for OMMFL by losing a bounded factor in the competitive ratio.



\end{enumerate}



\section{Online Non-metric Multi-Facility Location} In this section, we present an online randomized algorithm for ONMFL and analyze its competitive ratio.

\subsection{Preliminaries \& Related Work}

ONMFL is a  generalization of the \emph{Online Non-metric Facility Location} (ONFL)~\cite{Alon:2006:GAO:1198513.1198522} with $k =1$. Alon~$\emph{et al.}$~\cite{Alon:2006:GAO:1198513.1198522} gave an $\mathcal{O}(\log n \log m)$-competitive randomized algorithm for ONFL, where $n$ is the number of clients and $m$ is the number of facilities. 

A closely related problem is the \emph{Online Set k-Multicover} (OSMC)~\cite{BERMAN200854}. Given a universe $\mathcal {U}$ of $n$ elements, a family $\mathcal {S}$ of $m$ subsets of $V$, each associated with a cost, and a positive integer $k$. A subset $D \subseteq \mathcal {U}$ of elements arrives over time. OSMC asks to select a collection $\mathcal {C} \subseteq \mathcal {S}$ of subsets, of minimum cost, such that each arriving element belongs to at least $k$ subsets of $\mathcal {C}$. Berman and DasGubta~\cite{BERMAN200854} proposed an $\mathcal{O}(\log m \log d)$-competitive randomized algorithm for OSMC, where $m$ is the number of subsets and $d$ is the maximum set size.

Transformations between OSMC and ONMFL instances can be made in both directions. An instance of ONMFL can be transformed into an instance of OSMC as follows. We associate each facility with each of the $2^n -1$ possible groups of clients, and let each facility/group be a subset, with cost equal to the sum of the cost of the facility and the connecting costs of the clients in the group to the facility. We let each client be an element. Following this transformation, the algorithm of Berman and DasGubta would imply a feasible algorithm for ONMFL, but with competitive ratio $\mathcal{O}(\log (m(2^n -1)) \log n)$, where $n$ is the number of clients, and $m$ is the number of facilities. An instance of OSMC can be transformed into an instance of ONMFL as follows. We represent each subset by a facility and let the opening cost be the subset cost. We represent each element by a client and set the connecting cost to 0 if the client belongs to the subset/facility and infinity otherwise.The offline version of OSMC, in which all elements are given in advance, has an $\mathcal{O}(\log d)$-approximation~\cite{JOHNSON1974256,vazirani2001}, where $d$ is the maximum set size. Similar transformations can be made in the offline setting. This implies an $\mathcal{O}(\log n)$-approximation for the offline version of ONMFL, where $n$ is the number of clients. These are the best achievable unless $NP \subseteq DTIME(n^{\log \log n})$~\cite{10.1145/285055.285059}. The \emph{Online Set Cover} (OSC)~\cite{Alon:2003:OSC:780542.780558} is a special case of OSMC with $k = 1$. Korman's lower bound on the competitive ratio of OSC~\cite{Korman} implies an $\Omega(\log m\log n)$ lower bound on the competitive ratio of ONMFL, under the assumption that NP $\not\subseteq$ BPP, where $m$ is the number of facilities and $n$ is the number of clients.

\subsection{Online Algorithm}

Our algorithm for ONMFL is based on constructing a fractional solution first and then rounding it online into an integral solution. Unlike \cite{Alon:2003:OSC:780542.780558} and \cite{Alon:2006:GAO:1198513.1198522} for \emph{Online Set Cover} (OSC) and \emph{Online Non-metric Facility Location} (ONFL), respectively, in which a potential function is used in the competitive analysis of the algorithms, our analysis is based on simple arguments, similar to those we gave in our previous work for a related problem, the \emph{Online Node-weighted Steiner Forest}~\cite{DBLP:conf/iwoca/Markarian18}. 

We start by giving the following graph formulation for ONMFL. Given a root node $r$, $m$ facility nodes, and $n$ client nodes. We add an edge from $r$ to each facility node $j$, with cost equal to the opening cost of facility $j$. We add an edge from each facility node $j$ to each client node $i$, with cost equal to the connecting cost of client $i$ to facility $j$. As soon as a client $i$ arrives, we need to purchase $k$ disjoint paths between $r$ and $i$. The goal is to minimize the total costs of the paths purchased, where the cost of a path is the cost of its edges. To output a solution for ONMFL, each facility whose corresponding edge is purchased will be opened, and each client whose corresponding edge to an open facility is purchased will be connected to that facility.  

The algorithm initially knows $n$, the number of clients; $k$, the number of required connections; and the opening costs of facilities. A subset $D$ of $n' \leq n$ clients arrives over time. As soon as a client $i$ arrives, the algorithm is given the connecting costs of $i$ to each facility, and is expected to react.

We consider the graph formulation described earlier. Let $r$ be a root node; each facility is represented as a facility node, and has an edge to $r$, associated with its opening cost. Upon the arrival of a new client $i$, the algorithm creates a client node for it and adds an edge from this node to each facility node, associated with the given connecting cost. Let $G = (V, E)$ be this graph. 

Each edge added to $E$ is given a fraction, set intitially to 0. The algorithm does not allow these fractions to decrease over time. These form a \emph{fractional solution} for ONMFL. The \emph{maximum flow} between node $u$ and node $v$ in $G$ is the smallest total weight of edges which if removed would disconnect $u$ from $v$. These edges form a \emph{minimum cut} between $u$ and $v$ in $G$. To compute the maximum flow/minimum cut between two nodes in a graph, we run the algorithm of Shchroeder~$\emph{et al.}$ in \cite{Schroeder}. Let $c_e$  and $f_e$ be the cost and fraction of edge $e$, respectively. A path is purchased if and only if its edges are purchased.
\paragraph{Random process.} The algorithm makes its random choices, based on $\alpha$, the minimum among $2\left\lceil \log(kn+1) \right\rceil$ independently chosen random variables, distributed uniformly in the interval $\left[0, 1\right]$. 

\noindent Next, we describe how the algorithm reacts upon the arrival of a new client. 
\newline
\newline
{\bf Input:} $G = (V, E)$ and client node $i \in D$
\newline
{\bf Output:} Set of edges purchased
\newline
\newline
Make a copy $G'$ of $G$; \newline
As long as there are $< k$ disjoint paths purchased between $r$ and $i$ in $G$, do the following:
\begin{enumerate}
\item \indent While the maximum flow between $r$ and $i$ in $G'$ is less than 1, construct a minimum cut $Q$ between $r$ and $i$ in $G'$; for each edge $e \in Q$, make the following fraction increase:
$$f_{e} = f_{e} \cdot (1 + 1/c_{e}) + \frac{1}{\left| Q \right| \cdot c_{e}}$$
\item Purchase each edge $e$ with $f_e > \alpha$.
\item If there is no purchased path between $r$ and $i$ in $G'$, find a minimum-cost such path and purchase it. 
\item Refer to all facilities whose corresponding edges were purchased as \emph{open}; delete from $G'$ the purchased edges between $i$ and each open facility. 
\end{enumerate}

\subsection{Competitive Analysis}

The algorithm buys edges in the second and third steps. In the second step, its choices are made based on the random process, whereas in the third step, its choices are made to guarantee a feasible solution. We measure the expected cost of each separately.  Let $Opt$ be the cost of the optimal offline solution and let $frac$ be the cost of the \emph{fractional} solution constructed by the algorithm in the first step. 

\paragraph{Choices based on random process:} Let $S'$ be the set of edges purchased in the second step of the algorithm and let $C_{S'}$ be its expected cost. These edges are purchased by the algorithm based on the random process described earlier. Let us fix an $l: 1 \leq l \leq 2\left\lceil \log(kn+1) \right\rceil$ and an edge $e$. We denote by $X_{e, l}$ the indicator variable of the event that $e$ is chosen by the algorithm based on the random choice of $l$.   
\begin{equation}
\label{frac3}
C_{S'} = \sum_{e \in S'} \sum_{l = 1}^{2\left\lceil \log(kn+1) \right\rceil} c_e \cdot Exp \left[X_{e, l} \right]  = 2\left\lceil \log(kn+1) \right\rceil  \sum_{e \in S'} c_e  f_{e} 
\end{equation} 

Notice that $\sum_{e \in S'} c_e f_{e}$ is upper bounded by the cost of the fractional solution. The latter can be measured against the optimal offline solution, as follows. The idea here is that every time the algorithm performs a fraction increase, it does not exceed 2. Moreover, the total number of fraction increases can be measured in terms of the cost of the optimal offline solution. 

The fraction increase contributed by each edge $e$ in a minimum cut $Q$ is $\left( \frac{f_e}{c_e} + \frac{1}{\left| Q  \right| \cdot c_e} \right)$. The algorithm would make a fraction increase only if the maximum flow is less than 1. This means we have that $\sum_{e \in Q} f_{e} < 1$ before a fraction increase. Therefore, each fraction increase does not exceed: 

\begin{equation}
\label{frac2}
\sum_{e \in Q} c_e \cdot \left( \frac{f_{e}}{c_e} + \frac{1}{\left| Q  \right| \cdot c_e} \right) < 2
\end{equation}

As long as the algorithm hasn't purchased at least $k$ disjoint paths between $r$ and a given client $i$, it enters the loop that starts by constructing a maximum flow on the graph $G'$. Notice how $G'$ shrinks over time, as the algorithm purchases the paths. 

\begin{lemma}
\label{aboutG'}
Whenever the algorithm makes a fraction increase, $G'$ contains at least one path that is also in the optimal offline solution. 
\end{lemma}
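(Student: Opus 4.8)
The plan is to exploit the special two-layer structure of $G$: every $r$–$i$ path has length exactly two, passing through a single facility node, so the $k$ required disjoint paths for a client $i$ correspond to $k$ distinct facilities. I would first record two structural facts about how $G'$ evolves while the algorithm processes a fixed client $i$. Since $G'$ is initialized as a fresh copy of $G$, at the start of processing $i$ it contains every opening edge $(r,j)$ and every connecting edge $(j,i)$. The only edges ever removed from $G'$ are \emph{connecting} edges incident to $i$ (Step 4 deletes purchased edges between $i$ and open facilities); in particular no opening edge is ever deleted, so every opening edge remains present in $G'$ throughout. Moreover, each deletion is in one-to-one correspondence with a disjoint $r$–$i$ path the algorithm has already purchased: a facility $j$ is called \emph{open} only when its opening edge $(r,j)$ has been purchased, so deleting a purchased edge $(j,i)$ witnesses a complete purchased path $r\to j\to i$, and distinct deleted connecting edges go to distinct facilities, hence to internally disjoint paths.

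Next I would locate the moment of a fraction increase within the loop structure. A fraction increase occurs in Step 1, and Step 1 is entered only while the outer-loop guard holds, i.e. while strictly fewer than $k$ disjoint paths have been purchased between $r$ and $i$. Combining this with the correspondence above, at the instant of any fraction increase the number of connecting edges incident to $i$ that have been deleted from $G'$ is some $t<k$, corresponding to $t$ distinct facilities $A_1,\dots,A_t$.

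Then I would bring in the optimal solution. Since the optimum must connect $i$ to at least $k$ open facilities, it contains $k$ internally disjoint $r$–$i$ paths through $k$ distinct facilities $F_1,\dots,F_k$, all of whose edges lie in $G$. Because only $t<k$ connecting edges incident to $i$ have been deleted, at least one optimal facility, say $F_j$, is not among $A_1,\dots,A_t$, so its connecting edge $(F_j,i)$ has not been deleted, while its opening edge $(r,F_j)$ is never deleted. Hence the full optimal path $r\to F_j\to i$ survives in $G'$, which proves the lemma.

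I expect the only real care to be in the bookkeeping of the second paragraph, namely verifying that the outer-loop counter (``number of disjoint paths purchased in $G$'') matches the number of connecting edges incident to $i$ deleted from $G'$ via Step 4, and that purchasing a stray connecting edge whose facility is not yet open does not trigger a deletion. Once this correspondence is pinned down, the counting argument $t<k$ against the optimum's $k$ distinct facilities is immediate; there is no analytic difficulty here, only the need to track the evolution of $G'$ precisely.
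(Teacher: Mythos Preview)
Your proposal is correct and follows essentially the same counting argument as the paper: fewer than $k$ paths (equivalently, connecting edges to open facilities) have been removed from $G'$ when a fraction increase occurs, while the optimum uses $k$ disjoint $r$--$i$ paths, so at least one optimal path survives. Your version is more explicit about the two-layer structure of $G$ and the precise bookkeeping of which edges Step~4 deletes, whereas the paper's proof states the same idea in two sentences, loosely phrasing the deletions as ``removing $s$ feasible paths from $G$''; the added care you take (only connecting edges are deleted, deletions biject with purchased paths) is accurate and clarifies a point the paper leaves implicit, but the underlying argument is the same.
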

\begin{proof}
To see why this holds, fix any time $t$ before a fraction increase. Let $s < k$ be the number of disjoint paths purchased by the algorithm at time $t$. $G'$ at time $t$ must contain at least one optimal path since $G'$ is constructed by removing $s$ (less than $k$) feasible paths from $G$ and the optimal offline solution contains at least $k$ disjoint paths in $G$. \qed
\end{proof}

Finally, the algorithm would have an edge $e$ from the optimal offline solution in every minimum cut $Q$ of $G'$, since $Q$ must contain an edge from each path, by definition.  Based on the equation for the fraction increase, after $\mathcal{O}(\log |Q|) $ fraction increases, the fraction $f_e$ of $e$ becomes 1, and no further increases can be made, as $e$ will not be in any future minimum cut. The size of any minimum cut is upper bounded by $m$, the number of facilities or the maximum available paths between $r$ and client $i$. We can now bound the fractional solution: 

\begin{equation}
\label{frac1}
frac \leq \mathcal{O}(\log m \cdot Opt) 
\end{equation}

Combining Equations \ref{frac3}, \ref{frac2}, and \ref{frac1} imply an upper bound on the expected cost $C_{S'}$ of the edges bought in the second step of the algorithm:

\begin{equation}
\label{frac}
C_{S'} \leq \mathcal{O}(\log (kn) \log m \cdot Opt) 
\end{equation}

\paragraph{Choices to guarantee feasible solution:} Let $S''$ be the set of edges purchased in the third step of the algorithm and let $C_{S''}$ be its expected cost. These edges are purchased by the algorithm only if a path has not been bought by the random process in the second step. Every time the algorithm purchases a path in this step, its cost does not exceed $Opt$ since the algorithm buys the minimum-cost path in $G'$, and as we showed earlier in Lemma \ref{aboutG'}, $G'$ contains at least one path that is also in the optimal solution.

\begin{itemize}
\item (one client, one path) We start by calculating the expected cost incurred by a single client for purchasing a single path. Fix a client $i$. Let $Q_{j+1}$ be a minimum cut of $G'$ constructed after the algorithm has purchased $j < k$ disjoint paths between $r$ and $i$ and has completed the first step. The probability of purchasing the $(j + 1)^{th}$ path for a single $1 \leq l \leq 2\left\lceil \log(kn+1) \right\rceil$ is: 

$$ \prod_{e \in Q_{j+1}}  (1 - f_{e}) \leq e^{-\sum_{e  \in Q_{j+1}} f_{e}} \leq 1/e $$

Notice that the last inequality holds because the algorithm ensures that $\sum_{e  \in Q_{j+1}} f_{e}\geq 1$ at the end of the first step (Max-flow min-cut theorem). The expected cost of purchasing the $(j + 1)^{th}$ path for all $1 \leq l \leq 2\left\lceil \log(kn+1) \right\rceil$, is less than $1/(kn)^2 \cdot Opt$. 

\item (one client, $k$ paths) The expected cost of purchasing all $k$ paths is the sum of the expected costs for each path and is less than $k \cdot 1/(kn)^2 \cdot Opt$.  

\item (total cost of all clients) The total expected cost incurred by all $n'$ clients that arrived is less than:

$$n' \cdot k \cdot 1/(kn)^2 \cdot Opt \leq n \cdot k \cdot 1/(kn)^2 \cdot Opt =1/kn \cdot Opt $$. 

Therefore, the expected cost $C_{S''}$ of the edges bought in the third step of the algorithm is:

\begin{equation}
\label{integ}
C_{S''} \leq 1/kn \cdot Opt 
\end{equation}

\end{itemize}

Equations \ref{frac} and \ref{integ} yield to the following theorem.

\begin{theorem} There is an online $\mathcal{O}(\log (kn) \log m)$-competitive randomized algorithm for the Online Non-metric Multi-Facility Location, where $m$ is the number of facilities, $n$ is the number of clients, and $k$ is the number of required connections. 
\end{theorem}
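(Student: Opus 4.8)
The plan is to combine the two cost bounds already established and to verify feasibility, since the theorem is essentially the synthesis of the preceding analysis. First I would confirm that the algorithm terminates with a feasible solution. The outer loop runs as long as fewer than $k$ disjoint $r$--$i$ paths have been purchased, and each iteration guarantees progress: step 2 rounds the current fractional solution, and step 3 serves as a deterministic fallback that purchases the minimum-cost available path in $G'$ whenever the random process of step 2 fails to produce a new path. Thus at least one fresh path is bought per iteration. Step 4 then deletes from $G'$ the client--facility edges of the newly opened facilities, which forces the paths chosen in later iterations to be disjoint from the earlier ones. Hence after at most $k$ iterations every arriving client is connected to at least $k$ open facilities, so the output is feasible.

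Next I would bound the expected total cost by linearity of expectation. The only edges the algorithm ever purchases are those bought in step 2 (the set $S'$) and those bought in step 3 (the set $S''$), so the total expected cost equals $C_{S'} + C_{S''}$. I would invoke Equation \ref{frac}, which gives $C_{S'} \leq \mathcal{O}(\log (kn) \log m \cdot Opt)$, together with Equation \ref{integ}, which gives $C_{S''} \leq \frac{1}{kn} \cdot Opt$. Adding these two contributions, the second term is dominated by the first, so the total expected cost is $\mathcal{O}(\log (kn) \log m \cdot Opt)$, which is precisely the claimed competitive ratio.

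The crux of the whole argument lies in the two ingredients I am assuming rather than in their combination, which is routine. The harder of the two is the fractional bound of Equation \ref{frac1}: it rests on Lemma \ref{aboutG'}, which ensures that every minimum cut of the shrinking graph $G'$ contains an edge of some optimal path, so that each fraction increase both costs at most $2$ (Equation \ref{frac2}) and can be charged against an edge of the optimal offline solution, with the $\mathcal{O}(\log m)$ factor arising from the $\mathcal{O}(\log |Q|)$ increases needed to saturate such an edge. The rounding bound of Equation \ref{integ} is the other delicate piece, since it requires the failure probability of the random process, amplified over the $2\lceil \log(kn+1)\rceil$ independent copies, to be small enough that the fallback in step 3 contributes only a lower-order term. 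Once these are in hand, the theorem follows immediately by summing the two bounds.
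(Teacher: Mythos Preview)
Your proposal is correct and follows essentially the same approach as the paper: the theorem is stated immediately after Equations~\ref{frac} and~\ref{integ}, and the paper's ``proof'' is simply the sentence that these two equations yield the theorem. Your write-up is more explicit---you spell out feasibility and the summation $C_{S'}+C_{S''}$---but the structure and the ingredients invoked are identical.
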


\section{Online Metric Multi-Facility Location} 

In this section, we present an online deterministic algorithm for OMMFL and analyze its competitive ratio. While this problem has been intensively studied in the offline setting~\cite{Byrka2010,YAN2011545}, we are not aware of any online algorithm for it. 


\subsection{Preliminaries \& Related Work}

OMMFL is a generalization of metric \emph{Online Facility Location} (OFL)~\cite{fotakis2007primal,fotakis2008competitive,meyerson2001online} with $k = 1$. Meyerson~\cite{meyerson2001online} introduced metric OFL and proposed an $\mathcal{O}(\log n)$-competitive randomized algorithm, where $n$ is the number of clients. Fotakis~\cite{fotakis2008competitive} showed that no randomized online algorithm can achieve a competitive ratio better that $\Omega(\frac{\log n}{\log \log n})$ against an oblivious adversary and gave a deterministic algorithm with asymptotically matching $\mathcal{O}(\frac{\log n}{\log \log n})$-competitive ratio. In another work later, he proposed a primal-dual deterministic algorithm with $\mathcal{O}(\log n)$-competitive ratio, that was simpler to formulate, analyze, and implement~\cite{fotakis2007primal}. The competitive ratio we achieve for OMMFL is based on running the deterministic algorithm of Fotakis~\cite{fotakis2008competitive} for metric OFL.

The lower bound on the competitive ratio of metric OFL by Fotakis~\cite{fotakis2008competitive} implies an $\Omega(\frac{\log n}{\log \log n})$ lower bound on the competitive ratio of OMMFL.

\subsection{Online Algorithm}

Let $A_{OFL}$ be any online (deterministic or randomized) algorithm for metric \emph{Online Facility Location} (OFL), with competitive ratio $r$. Given an instance $I$ of OMMFL with positive integer $k$. Client $i$ arrives. Our algorithm needs to connect $i$ to $k$ different open facilities. 

\begin{enumerate}

\item The algorithm starts by running $A_{OFL}$ on instance $I$, where $k = 1$. This results in opening some facilities and connecting $i$ to one open facility. 
\item If $i$ is the first client, we open the cheapest $k-1$ facilities other than the one $i$ is connected to. From this point on, there are at least $k$ open facilities. Until $A_{OFL}$ opens these facilities itself, these remain closed with respect to $A_{OFL}$. 
\item The algorithm will then connect $i$ to any other $k-1$ open facilities. 
\end{enumerate}

\subsection{Competitive Analysis}

Let $I$ be an instance of OMMFL with positive integer $k$. Let $I'$ be the same instance as $I$ except for $k = 1$. Let $Opt$ and $Opt'$ be the cost of the optimal solution for $I$ and that for $I'$, respectively. Let $C$ and $C'$ be the cost of our algorithm for $I$ and that of $A_{OFL}$ for $I'$, respectively. We denote by $C_{fac}$ and $C_{con}$ the costs incurred by our algorithm to open facilities and to connect clients, respectively. We denote by $C'_{fac}$ and $C'_{con}$ the costs incurred by $A_{OFL}$ to open facilities and to connect clients, respectively.

The algorithm opens the cheapest $k-1$ facilities other than the ones opened by $A_{OFL}$. Let $f_{max}$ be the maximum facility cost and $f_{min}$ the minimum facility cost. Thus, we have that: 


\begin{equation}
C_{fac} \leq C'_{fac} + f_{max} \cdot (k - 1)
\end{equation}

Given any client $i$. Apart from its connecting cost $c_i$ incurred by $A_{OFL}$, our algorithm connects $i$ to $k - 1$ other facilities, each resulting in a connecting cost at most $\frac{c_{max}}{c_{min}} \cdot c_i $, where $c_{max}$ and $c_{min}$ are the maximum and minimum connecting costs, respectively. This implies an overall connecting cost: 

\begin{equation}
C_{con} \leq C'_{con} \cdot (1 + \frac{c_{max}}{c_{min}}(k - 1))
\end{equation}

We now add the two equations above and do some algebraic manipulations by using: $C'_{con} \leq C'$, $C' _{fac} \leq C'$, $C' \geq c_{min}$, and $C' \geq f_{min}$ to get: 

 \begin{equation}
C \leq C' \cdot (2 + \frac{f_{max}}{f_{min}} (k-1) + \frac{c_{max}}{c_{min}} (k-1))
\end{equation}

Recall that $A_{OFL}$ is $r$-competitive and so $C' \leq r \cdot Opt'$. Since $Opt' \leq Opt$, the theorem below follows.

\begin{theorem} Given an online (deterministic or randomized) $r$-competitive algorithm for metric Online Facility Location. Then there is an online \\ $\mathcal{O}(\max \{ \frac{f_{max}}{f_{min}}, \frac{c_{max}}{c_{min}} \} \cdot k  \cdot r)$-competitive algorithm for the Online Metric Multi-Facility Location, where $k$ is the number of required connections; $c_{max}$ and $c_{min}$ are the maximum and minimum connecting costs, respectively; $f_{max}$ and $f_{min}$ are the maximum and minimum facility costs, respectively.
\end{theorem}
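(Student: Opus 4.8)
The plan is to analyze the algorithm by comparing its cost on the instance $I$ (carrying the true parameter $k$) against the cost of the subroutine $A_{OFL}$ run on the derived single-connection instance $I'$, and then to convert that comparison into a bound in terms of $Opt$ via the competitive guarantee of $A_{OFL}$. The first facts I would record are that $Opt' \leq Opt$, since any feasible solution for $I$ connects every client to at least $k$ open facilities and hence is in particular feasible for $I'$, and that the second step of the algorithm ensures at least $k$ open facilities from the first client onward, so the third step is always executable and the algorithm is well defined (for $m \geq k$). A point worth flagging early is that the $k-1$ extra facilities are kept \emph{closed with respect to} $A_{OFL}$: this is exactly what lets $A_{OFL}$ run as though on $I'$, so its guarantee $C' \leq r \cdot Opt'$ continues to apply unchanged.

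Next I would bound the two cost components separately. For the facility cost, the only facilities opened beyond those of $A_{OFL}$ are the $k-1$ cheapest ones, opened a single time for the first client; since each costs at most $f_{max}$, this gives $C_{fac} \leq C'_{fac} + f_{max}(k-1)$. For the connection cost, fix a client $i$ and let $c_i$ be the cost with which $A_{OFL}$ connects it; our algorithm makes $k-1$ further connections, each of cost at most $c_{max}$. The key estimate is $c_{max} = \frac{c_{max}}{c_{min}} c_{min} \leq \frac{c_{max}}{c_{min}} c_i$, using $c_i \geq c_{min}$, so summing over all clients yields $C_{con} \leq C'_{con}\bigl(1 + \frac{c_{max}}{c_{min}}(k-1)\bigr)$.

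Finally I would add the two bounds and eliminate the $A_{OFL}$ quantities in favor of $C'$. Applying $C'_{fac} \leq C'$ and $C'_{con} \leq C'$ separately (which is what produces the additive constant $2$), together with $f_{min} \leq C'$ and $c_{min} \leq C'$ — both holding because any nonempty solution opens at least one facility and pays at least one connection — lets me factor out $C'$ and reach $C \leq C'\bigl(2 + \frac{f_{max}}{f_{min}}(k-1) + \frac{c_{max}}{c_{min}}(k-1)\bigr)$. Substituting $C' \leq r \cdot Opt'$ and then $Opt' \leq Opt$ gives the claimed $\mathcal{O}\bigl(\max\{\frac{f_{max}}{f_{min}}, \frac{c_{max}}{c_{min}}\} \cdot k \cdot r\bigr)$ competitive ratio. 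I expect the main subtlety to lie not in the arithmetic but in justifying that the guarantee of $A_{OFL}$ transfers cleanly — namely that running it on $I$ with the extra facilities hidden is indistinguishable from running it on $I'$, so that $C' \leq r \cdot Opt'$ holds verbatim, and that $Opt' \leq Opt$ so the final comparison reaches $Opt$ rather than $Opt'$.
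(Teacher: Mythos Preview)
Your proposal is correct and follows essentially the same route as the paper: you bound $C_{fac}$ and $C_{con}$ separately exactly as the paper does, combine them via $C'_{fac},C'_{con}\le C'$ and $f_{min}\le C'$ to obtain $C \le C'\bigl(2 + \tfrac{f_{max}}{f_{min}}(k-1) + \tfrac{c_{max}}{c_{min}}(k-1)\bigr)$, and then invoke $C'\le r\cdot Opt'$ and $Opt'\le Opt$. Your added remarks on why $A_{OFL}$'s guarantee transfers (the extra facilities being hidden from it) and why $Opt'\le Opt$ are points the paper leaves implicit, so your write-up is in fact slightly more explicit than the original.
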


Running the deterministic algorithm of Fotakis~\cite{fotakis2008competitive} for metric OFL, with $\mathcal{O}(\frac{\log n}{\log \log n})$-competitive ratio, results in the following. 

\begin{corollary}
There is an online $\mathcal{O}(\max \{ \frac{f_{max}}{f_{min}}, \frac{c_{max}}{c_{min}} \} \cdot k  \cdot \frac{\log n}{\log \log n})$-competitive deterministic algorithm for the Online Metric Multi-Facility Location, where $n$ is the number of clients; $k$ is the number of required connections; $c_{max}$ and $c_{min}$ are the maximum and minimum connecting costs, respectively; $f_{max}$ and $f_{min}$ are the maximum and minimum facility costs, respectively.
\end{corollary}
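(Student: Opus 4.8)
The plan is to derive the corollary as a direct instantiation of the preceding theorem, treating the generic metric OFL subroutine $A_{OFL}$ as a black box. The theorem already establishes that any $r$-competitive algorithm for metric OFL can be converted, via the three-step procedure of the previous subsection, into an $\mathcal{O}(\max\{\frac{f_{max}}{f_{min}}, \frac{c_{max}}{c_{min}}\} \cdot k \cdot r)$-competitive algorithm for OMMFL. Hence the only remaining work is to select a concrete $A_{OFL}$ with a favorable competitive ratio $r$ and to confirm that the hypotheses of the theorem are satisfied by this choice.

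First I would instantiate $A_{OFL}$ with Fotakis's deterministic algorithm for metric OFL, which is $\mathcal{O}(\frac{\log n}{\log \log n})$-competitive. Setting $r = \mathcal{O}(\frac{\log n}{\log \log n})$ and substituting into the theorem's bound immediately yields the claimed $\mathcal{O}(\max\{\frac{f_{max}}{f_{min}}, \frac{c_{max}}{c_{min}}\} \cdot k \cdot \frac{\log n}{\log \log n})$ competitive ratio. To justify the determinism claim, I would observe that the conversion adds no randomness of its own: steps 2 and 3 only open the cheapest $k-1$ facilities and connect each client to arbitrary already-open facilities, both deterministic operations. Since Fotakis's base algorithm is deterministic, the composed algorithm is deterministic as well.

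The main obstacle here is essentially absent — the corollary is a routine substitution rather than a fresh argument — so the real care lies in verifying that nothing in the reduction breaks under this particular $A_{OFL}$. Concretely, I would double-check the inequalities invoked in the theorem's proof ($C'_{con} \leq C'$, $C'_{fac} \leq C'$, $C' \geq c_{min}$, and $C' \geq f_{min}$), which hold as long as at least one client arrives so that $A_{OFL}$ opens at least one facility and pays at least one connecting cost. I would also note, for completeness, that the earlier $\Omega(\frac{\log n}{\log \log n})$ lower bound inherited from metric OFL shows the $\frac{\log n}{\log \log n}$ dependence is best possible for this framework, so the corollary cannot be improved in that factor without first improving the underlying OFL subroutine.
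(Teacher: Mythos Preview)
Your proposal is correct and matches the paper's approach exactly: the corollary is obtained by plugging Fotakis's deterministic $\mathcal{O}(\frac{\log n}{\log\log n})$-competitive metric OFL algorithm into the preceding theorem as $A_{OFL}$, and observing that the wrapper steps are themselves deterministic. The additional sanity checks you list (nonemptiness of the client sequence, the lower-bound remark) go slightly beyond what the paper writes but are consistent with it.
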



\section{Concluding Remarks \& Future Work} 


In this paper, we have assumed there is a unique positive integer $k$ for all clients. In many application scenarios, it is likely that each client has different number of required connections. A slight modification in our algorithms would yield to $\mathcal{O}(\log (k_{max}n) \log m)$ and $\mathcal{O}(\max \{ \frac{f_{max}}{f_{min}}, \frac{c_{max}}{c_{min}} \} \cdot k_{max}  \cdot \frac{\log n}{\log \log n})$ competitive ratios for ONMFL and OMMFL, respectively, where $k_{max}$ is the maximum required connections. One research direction is to target competitive ratios independent of $k_{max}$, $k_{min}$, or even $k$.


This brings us to the next question, for \emph{Online Metric Multi-Facility Location} (OMMFL) about whether we can get rid of the parameters $c_{max}$, $c_{min}$, $f_{max}$, and $f_{min}$ from the competitive ratio or achieve lower bounds in terms of these parameters. To achieve the former, one may want to attempt a primal-dual approach for instance, by trying to extend the algorithm of Fotakis~\cite{fotakis2007primal} for metric \emph{Online Facility Location}.

Finally, demands and their arrival order in this paper are given by an oblivious adversary. Assuming these are given according to some probability distribution, it might be possible to design online algorithms with better competitive ratios.







\bibliographystyle{splncs04}

\bibliography{bibliography}

\end{document}